\newcommand{\arrow}[1]{\overrightarrow{#1}}
\newcommand{\set}[1]{\operatorname{\textsc{#1}}}
\newcommand{\true}[0]{\textsc{True}}
\newcommand{\false}[0]{\textsc{False}}
\newcommand{\code}[1]{\operatorname{\texttt{#1}}}
\newcommand{\decode}[0]{\operatorname{\sf decode}}
\newcommand{\encode}[0]{\operatorname{\sf encode}}
\newcommand{\abstractValue}[0]{\operatorname{\sf abstract-value}}
\newcommand{\concreteValue}[0]{\operatorname{\sf concrete-value}}
\newcommand{\compile}[0]{\operatorname{\sf compile}}
\newcommand{\flags}[0]{\operatorname{\sf flags}}
\newcommand{\argument}[0]{\operatorname{\sf argument}}
\newcommand{\arguments}[0]{\operatorname{\sf arguments}}
\newcommand{\merge}[0]{\operatorname{\sf merge}}
\newcommand{\absEnv}[0]{E_\mathbb{A}}
\newcommand{\defness}[0]{\operatorname{\sf def}}
\newcommand{\numeric}[0]{\operatorname{\sf numeric}}
\tikzset {
    every join/.style={rounded corners}
  , tip/.style={->,>=latex}
  , tipDouble/.style={->,>=latex,double}
  , tipThick/.style={->,>=latex,thick}
  , tipThin/.style={->,>=latex,thin}
  , labelabove/.style={font=\small,above}
  , labelbelow/.style={font=\small,below}
  , labelright/.style={font=\small,right}
  , labelleft/.style={font=\small,left}
  , data/.style={rectangle,rounded corners=1pt,draw=black!50,fill=black!10,thick}
  , library/.style={rectangle,rounded corners=1pt,draw=blue!50,fill=blue!20,thick}
  }
\title{Propositional Encoding 
  of Constraints over Tree-Shaped Data}
\author{Alexander Bau \and Johannes Waldmann \\ 
  \institute{HTWK Leipzig (F-IMN)\\P.O.B. 301166\\04251 Leipzig, Germany}
}
\begin{document}

\maketitle

\pagestyle{plain} 

\newcommand{\marginpar{file:\\ \texttt{}}\input{}}[1]{\marginpar{file:\\ \texttt{#1}}\input{#1}}

\begin{abstract}
We present a functional programming language
for specifying constraints over tree-shaped data.
The language allows for Haskell-like algebraic data types
and pattern matching.
Our constraint compiler CO4 translates these programs
into satisfiability problems in propositional logic.
We present an application from the area of
automated analysis of (non-)termination of rewrite systems.
\end{abstract}

\section{Motivation}

The paper presents a high-level declarative language
CO4 for describing constraint systems.
The language includes user-defined
algebraic data types and recursive functions
defined by pattern matching,
as well as higher-order and polymorphic types.
This language comes with a compiler that transforms
a high-level constraint system into a satisfiability problem
in propositional logic. This is motivated by the following.

Recent years have seen a tremendous development 
of constraint solvers for propositional satisfiability (SAT solvers).
Based on the 
Davis-Putname-Logemann-Loveland %
algorithm and extended with 
conflict-driven clause learning, %
SAT solvers like Minisat \cite{DBLP:conf/sat/EenS03} are able to find satisfying
assignments for conjunctive normal forms with $10^6$ and more clauses
in a lot of cases. SAT solvers are used in industrial-grade 
verification of hardware and software.

With the availability of powerful SAT solvers,
\emph{propositional encoding} is a promising method
to solve constraint systems that originate in different domains.
In particular, this  approach had been used for automatically 
analysing (non-)termination of rewriting 
\cite{DBLP:conf/ieaaie/KuriharaK04,DBLP:conf/sofsem/ZanklSHM10,DBLP:journals/jar/CodishGST12}
successfully, as can be seen from the results of International
Termination Competitions (most of the participants use
propositional encodings).

So far, these encodings are written manually:
the programmer has to construct explicitly a formula in propositional logic
that encodes the desired properties.
This has the advantage that the formula can be optimized in clever ways,
but also the drawback that correctness of the formula is not evident,
so the process is error-prone. 

This is especially so if the data domain
for the constraint system is remote from the ``sequence of bits'' domain
that naturally fits propositional logic.
In typical applications, data is tree-structured 
(e.g., terms, and lists of terms)
and one wants to write constraints on such data in a direct way.

Our language is similar to Haskell \cite{haskell}
in the following sense: CO4 syntactically is a subset of Haskell
(including data declarations, case expressions, higher order functions,
polymorphism, but no type classes), and semantically CO4 is evaluated strictly.

The advantages of re-using a high level declarative language
for expressing constraint systems are:
the programmer can rely on established syntax and semantics,
does not have to learn a new language, can re-use his experience
and intuition, and can re-use actual code.

For instance, the (Haskell) function that describes
the application of a rewrite rule at some position in some
term can be directly used in a constraint system
that describes a rewrite sequence with a certain property.
We treat this application in detail in Section~\ref{sec:loop},
but need some preparation first.

A constraint programming language needs some way of parametrizing
the constraint system to data that is not available
when writing the program. For instance, a constraint program
for finding looping derivations for a rewrite system $R$,
will not contain a fixed system $R$, but will get $R$ as run-time input.

To accomodate for such applications, 
CO4 programs are handled and executed in two stages: 
The input program defines a function of type

\begin{equation*}
  f : K \times U \to \{\false,\true\}
\end{equation*}

where $K$ is some parameter domain (e.g., rewrite systems)
and $U$ is the domain of the unknown object (e.g., derivations).
In the first processing stage (at \emph{compile-time}), 
the program for $f$ is translated into a program

\begin{equation*}
  g : K \to (F, \Sigma \to U)
\end{equation*}

with $F$ being the set of formulas of propositional logic,
and $\Sigma$ being the set of assignments from variables of $F$ to truth values.

In the second stage (at \emph{run-time}),
a parameter value $p \in K$ is given,
and $g\ p$ is evaluated to produce a pair $(v,d) \in (F, \Sigma \to U)$.
An external SAT solver then tries to determine a satisfying assignment 
$\sigma \in \Sigma$ of $v$.
On success, $d(\sigma)$ is evaluated to a \emph{solution} value $s \in U$.
Proper compilation ensures that $f\ p\ s = \true$.

A formal specification of compilation is given in Section~\ref{sec:semantics},
and a concrete realization of compilation of first-order programs
using algebraic data types and pattern matching
is given in Section~\ref{sec:code}.
In these sections, we assume that data types are finite
(e.g., composed from \verb|Bool|, \verb|Maybe|, \verb|Either|),
and programs are total. 
We then extend this 
in section \ref{sec:code-recursive} 
to handle infinite (that is, recursive) data types
(e.g., lists, trees).

We then treat briefly two ideas that serve to improve writing
and executing CO4 programs:
In Section~\ref{sec:remove}, we discuss the compilation 
of higher-order and polymorphic features in CO4 programs.
In Section~\ref{sec:memo}, we show that memoization of function calls
improves efficiency since it allows to share sub-formulas.

With these preparations, we give the CO4 formulation of looping derivations
in term %
rewriting systems in Section~\ref{sec:loop}.
Propositional encodings for string rewrite sequences have appeared
in the literature~\cite{DBLP:conf/sofsem/ZanklSHM10}.
To our knowledge, the propositional encoding of term rewriting is new,
and it looks quite an insurmountable task to write such an encoding 
without the help of a compilation system.

\newcommand{\PP}{\mathbb{P}}
\newcommand{\QQ}{\mathbb{Q}}
\newcommand{\CC}{\mathbb{C}}
\renewcommand{\AA}{\mathbb{A}}
\newcommand{\BB}{\mathbb{B}}
\newcommand{\cvalue}{\concreteValue} %
\newcommand{\avalue}{\abstractValue} %

\section{Semantics of Propositional Encodings}
\label{sec:semantics}

In this section give the specification for compilation of CO4 expressions,
in the form of an invariant (it should hold for all sub-expressions).
When applied to the full input program, the specification implies
that the compiler works as expected: 
a solution for the constraint system can be found via the external SAT solver.
We defer discussion of our implementation of this specification 
to Section~\ref{sec:code}, and give here a more formal, but still
high-level view of the CO4 language and compiler.

\paragraph{Evaluations on concrete data.}
We denote by $\PP$ the set of expressions in the input language.
It is a first-order functional language with
\begin{itemize}
\item algebraic data types,
\item pattern matching,
\item global and local function definitions (using \verb|let|)
  that may be recursive.
\end{itemize}
The concrete syntax is a subset of Haskell. We give examples---
which may appear unrealistically simple
but at this point we cannot use higher-order or polymorphic features.
These will be discussed in see Section~\ref{sec:remove}.

\begin{verbatim}
data Bool = False | True
and2 :: Bool -> Bool -> Bool
and2 x y = case x of { False -> False ; True -> y }

data Maybe_Bool = Nothing | Just Bool

f :: Maybe_Bool -> Maybe_Bool -> Maybe_Bool
f p q = case p of
    Nothing -> Nothing
    Just x -> case q of
        Nothing -> Nothing
        Just y -> Just (and2 x y)
\end{verbatim}

For instance, \verb|f (Just x) Nothing| is an expression of $\PP$,
containing a variable \verb|x|. 
We allow only \emph{simple} patterns (a constructor followed by variables), 
and we require that pattern matches are \emph{complete}
(there is exactly one pattern for each constructor of the respective type).
It is obvious that nested patterns can be translated to this form.

Evaluation of expressions is defined in the standard way:
The domain of \emph{concrete values} $\CC$ is the set of data terms.
For instance, \texttt{Just False}$\ \in \CC$.
A \emph{concrete environment} 
is a mapping from program variables to $\CC$.
A \emph{concrete evaluation function} $\cvalue:E_\CC \times \PP \to \CC$
computes the value of a concrete expression $p\in\PP$
in a concrete environment $e_\CC$.
Evaluation of function and constructor arguments is strict.
This is where we deviate from Haskell's lazy evaluation.

\paragraph{Evaluations on abstract data.}

The CO4 compiler transforms an input program that operates on concrete values,
to an \emph{abstract program} that operates on \emph{abstract values}.
An abstract value contains propositional logic formulas 
that may contain free propositional variables.
An abstract value represents a set of concrete values. 
Each assignment of the propositional values produces a concrete value.

We formalize this in the following way:
the domain of abstract values is called $\AA$.
The set of assignments 
(mappings from propositional variables to truth values $\BB=\{0,1\}$)
is called $\Sigma$,
and there is a function $\decode:\AA\times \Sigma \to\CC$.

We now specify abstract evaluation. 
(The implementation is given in Section~\ref{sec:code}.)
We use \emph{abstract environments} $E_\AA$
that map program variables to abstract values, 
and an \emph{abstract evaluation function}
$\avalue: E_\AA \times \PP \to \AA$.

\paragraph{Allocators.} As explained in the introduction,
the constraint program receives known and unknown arguments.
The compiled program operates on abstract values.

The abstract value that represents a (finite) set of concrete values
of an unknown argument is obtained from an \emph{allocator}.
For a property $q : \CC\to\BB$ of concrete values,
a  $q$-allocator constructs an object $a\in\AA$
that represents all concrete objects that satisfy $q$:
\[ \forall c\in \CC: q(c) \iff \exists \sigma\in\Sigma : c=\decode(a,\sigma). \]

We use allocators for properties $q$ that specify 
$c$ uses constructors that belong to a specific type.
Later (with recursive types, see Section~\ref{sec:code-recursive})
we also specify a size bound for $c$.
An example is an allocator for lists of booleans of length $\le 4$.

As a special case, an allocator for a singleton set
is used for encoding a known concrete value.
This \emph{constant allocator} is given by a function $\encode:\CC\to\AA$ 
with the property that $\forall c\in\CC, \sigma\in\Sigma: \decode(\encode(c),\sigma)=c$.

\paragraph{Correctness of constraint compilation.}

The semantical relation between an expression $p$ (a concrete program)
and its compiled version $\compile(p)$ (an abstract program) is given by
the following relation between concrete and abstract evaluation:

\begin{definition} \label{def:invariant}
We say that $p\in\PP$ is compiled \emph{correctly}
  if 
\begin{equation}
  \begin{aligned}
  \label{eq:invariant}
  \forall e\in E_\AA \ \forall \sigma \in\Sigma: &
              \decode (\avalue (e, \compile(p)),\sigma)\\
            = &\cvalue (\decode(e,\sigma), p)
  \end{aligned}
\end{equation}
\end{definition}

Here we used $\decode(e,\sigma)$ as notation for lifting the
decoding function to environments, defined element-wise by 
\begin{equation*}
  \label{eq:lift-decode}
  \forall e\in E_\AA \ \forall v \in \operatorname{dom}(e) \ \forall \sigma \in\Sigma: 
  \decode(e,\sigma)(v) = \decode(e(v),\sigma).
\end{equation*}

\paragraph{Application of the Correctness Property.}

We are now in a position to show how the stages of CO4 compilation 
and execution fit together.

The top-level parametric constraint is given by a function declaration
\verb|main k u = b|
where \verb|b| (the \emph{body}, a concrete program) is of type \verb|Bool|.
It will be processed in the following stages:
\begin{enumerate}
\item \emph{compilation} produces an abstract program $\compile(b)$,
\item \emph{abstract computation} takes a concrete parameter value $p\in \CC$
  and a $q$-allocator $a\in \AA$, and computes the formula
\[ F =\avalue( \{ k \mapsto \encode (p), u \mapsto a \} , \compile(b)) \]
\item \emph{solving} calls the backend SAT solver
  to determine $\sigma\in\Sigma$ with  $\decode(F,\sigma)=\true$. 
  If this was successful,
\item \emph{decoding} produces a concrete value $s = \decode(a,\sigma)$,
\item and optionally, \emph{testing} checks that 
  $\cvalue (\{ k \mapsto p, u \mapsto s\}, b)= \true$.
\end{enumerate}
The last step is just for reassurance against implementation errors,
since the invariant implies that the test returns True.
This highlights another advantage of re-using Haskell for constraint programming:
one can easily check the correctness of a solution candidate.

\section{Implementation of a Propositional Encoding} \label{sec:code}

In this section, we give a realiziation for abstract values,
and show how compilation creates programs that operate correctly
on those values, as specified in Definition~\ref{def:invariant}

\paragraph{Encoding and Decoding of Abstract Values.}

The central idea is to represent an abstract value as a tree,
where each node contains an encoding for a symbol (a constructor)
at the corresponding position,
and the list of concrete children of the node
is a prefix of the list of abstract children
(the length of the prefix is the arity of the constructor).

The encoding of constructors is by a sequence of formulas
that represent the number of the constructor in binary notation.

We denote by $\set{F}$ the set of propositional logic formulas.
At this point, we do not prescribe a concrete representation.
For efficiency reasons, we will allow some form of sharing,
by representing formulas as directed acyclic graphs
(e.g., and/inverter graphs). 
Our implementation (\verb|satchmo-core|) 
assigns names to subformulas by doing the Tseitin transform on-the-fly,
creating a fresh propositional literal for each subformula.

\begin{definition}
  \label{def:abstract-value}
  The set of abstract values $\AA$ is the smallest set with  $\AA = \set{F}^* \times \AA^*$.

  An element $a\in \AA$ thus has shape $(\arrow{f},\arrow{a})$
  where $\arrow{f}$ is a sequence of formulas, called the \emph{flags} of $a$,
  and $\arrow{a}$ is a sequence of abstract values, called the \emph{arguments} of $a$.

  We introduce notation
  \begin{itemize}\itemsep 0pt \parskip 2pt
    \item $\flags : \mathbb{A} \to \set{F}^*$ gives the flags of an abstract value
    \item $\flags_i : \mathbb{A} \to \set{F}$ gives the $i$-th flag of an abstract value
    \item $\arguments : \AA\to\AA^* $ gives the arguments of an abstract value,
    \item $\argument_{i} : \mathbb{A} \to \mathbb{A}$ gives the
      $i$-th argument of an abstract value
  \end{itemize}

  Equivalently, in Haskell notation, 
\begin{verbatim}
data A = A { flags :: [F] , arguments :: [A] }
\end{verbatim}
\end{definition}

The sequence of flags of an abstract value encodes the number of its constructor.
We use the following variant of a binary encoding:
For each data type $T$ with $c$ constructors, 
we use as flags a set of sequences $S \subseteq \{0,1\}^*$ with $|S|=c$
and such that each long enough $w\in\{0,1\}^*$ does have exactly one prefix in $S$.

We could have $S$ dependent on $T$, but this is not necessary.
In practice we use a fixed encoding
\begin{equation*}
  \begin{aligned}
    S_1 = \{\epsilon\}; \qquad
    \text{for $n>1$:} \quad S_n = 0\cdot S_{\lceil n/2 \rceil} \cup 1 \cdot S_{\lfloor n/2\rfloor} 
  \end{aligned}
\end{equation*}
For example, 
$S_2=\{0,1\}, 
S_3 = \{00, 01,1\},
S_5 = \{000,001,01,10,11\}$.
The lexicographic order of $S_c$ induces a bijection 
$\numeric_c:S_c\to\{1,\ldots,c\}$.

The encoding function (from concrete to abstract values)
is defined by 
\begin{equation*}
  \encode_T(C(v_1,\ldots)) = (\numeric_c^-(i), [\encode_{T_1}(v_1),\ldots])
\end{equation*}
where $C$ is the $i$-th constructor of type $T$,
and $T_j$ is the type of the $j$-th argument of $C$.
Note that here, $\numeric_c^-(i)$ denotes a sequence of
constant flags (formulas) that represents the corresponding binary string.

For decoding, we need to take care of extra flags and arguments
that may have been created
by the function $\merge$ (Definition~\ref{def:merge})
that is used in the compilation of \verb|case| expressions.

We extend the mapping $\numeric_c$ to longer strings by
$\numeric_c(u\cdot v) := \numeric_c(u)$ for each $u\in S_c,v\in\{0,1\}^*$.
This is possible because of the unique-prefix condition.

Given the type declarations
\begin{verbatim}
data Bool = False | True
data Maybe_Bool = Nothing | Just Bool
data Ordering = LT | EQ | GT
data Either_Bool_Ordering = Left Bool | Right Ordering
\end{verbatim}
the concrete value \verb|True| can be represented
by the abstract value $a_1=([x],[])$ and assignment $\{x=1\}$,
since \verb|True| is the second (of two) constructors,
and $\numeric_2([1])=2$.
The same concrete value \verb|True| can also be represented
by the abstract value $a_2=([x,y],[a_1])$ and assignment $\{x=1,y=0\}$,
since $\numeric_2([1,0])=2$. This shows that extra flags and extra
arguments are ignored in decoding.

We give a formal definition:
for a type $T$ with $c$ constructors,
$\decode_T(({f},{a}),\sigma)$ 
is the concrete value $v= C_i (v_1,\ldots)$ 
where $i=\numeric_c({f}\sigma)$,
and $C_i$ is the $i$-th constructor of $T$,
and $v_j = \decode_{T_j}(a_j,\sigma)$
where $T_j$ is the type of the $j$-th argument of $C_i$.

As stated, this is a partial function, 
since any of ${f}, {a}$ may be too short.
For this Section, we assume that abstract values always have
enough flags and arguments for decoding, 
and we defer a discussion of partial decodings to Section~\ref{sec:code-recursive}.

\paragraph{Allocators for Abstract Values.}

Since we consider (in this section) finite types only,
we restrict to \emph{complete} allocators:
for a type $T$, a complete allocator is an abstract value $a\in \AA$
that can represent each element of $T$:
for each $e\in T$, there is some $\sigma$ such that $\decode_T(a,\sigma)=e$.

For the types given above, complete allocators are

\begin{center}
  \begin{tabular}{l|l}
    type & complete allocator \\ \hline 
    $\code{Bool}$ & $a_1=([x_1], [])$ \\
    $\code{Ordering}$ & $a_2=([x_1,x_2], [])$ \\
    $\code{Either\_Bool\_Ordering}$ & $a_3=([x_1], [([x_2,x_3],[])])$ \\
  \end{tabular}
\end{center}

where $x_1,\ldots$ are (boolean) variables.
We compute $\decode(a_3,\sigma)$ for $\sigma=\{x_1=0,x_2=1,x_3=0\}) $:
Since $\numeric_2([0])=1$, the top constuctor is \verb|Left|.
It has one argument, obtained as $\decode_{\texttt{Bool}}(([x_2,x_3],[]),\sigma)$.
For this we compute $\numeric_2([1,0])=2$,
denoting the second constructor (\verb|True|) of \verb|Bool|.
Thus, $\decode(a_3,\sigma)=\texttt{Left True}$.

\paragraph{Compilation of Programs.}

In the following we illustrate the actual transformation of the input program
(that operates on concrete values)
to an abstract program (operating on abstract values)
and prove its soundness according to invariant (Definition~\ref{def:invariant}).

Generally, compilation keeps structure and names of the program intact.
For instance, if the original program defines functions $f$ and $g$,
and the implementation of $g$ calls  $f$,
then the transformed program also defines functions $f$ and $g$,
and the implementation of $g$ calls $f$.

The crucial exception is that \emph{compilation removes pattern matches}.
This is motivated as follows.
Concrete evaluation of a pattern match (in the input program)
consists of choosing a branch according to a concrete value 
(of the discriminant expression).
Abstract evaluation cannot access this concrete value
(since it will only be available after the SAT solver determines an assignment).
This means that we cannot abstractly evaluate pattern matches.
Therefore, they must be removed by compilation.

Compilation of variables, bindings, and function calls
is straightforward, and we deal with them first.

\begin{definition}[Compilation, easy cases]
  \begin{itemize}
  \item a name is compiled into itself: 

    if $v$ is a variable, then $\compile(v)=v$.
  \item a local binding is compiled structurally:

    $\compile(\code{let}~v = a~\code{in}~b) 
    ~= ~
    \code{let}~v =  \compile(a)~\code{in}~\compile(b)$

  \item a function call is compiled structurally:

    $\compile (f(a_1,\ldots,a_n)) = f(\compile(a_1),\ldots,\compile(a_n))$

    Here, compilation creates an application of $f$.
    It is executed during abstract evaluation.
  \end{itemize}
\end{definition}
\begin{lemma}[Correctness of compilation, easy cases]
  Invariant \eqref{eq:invariant} holds on compilation of variables, local bindings
  and function calls.
\end{lemma}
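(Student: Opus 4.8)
The plan is to proceed by case analysis on the three syntactic forms, in each case unfolding the definition of $\compile$, pushing $\decode$ through the abstract evaluation, and matching the result against the concrete evaluation. For all three cases the argument is essentially "the abstract semantics mirrors the concrete semantics step for step, and $\decode$ is a homomorphism over this structure." I would fix arbitrary $e\in E_\AA$ and $\sigma\in\Sigma$ throughout, and abbreviate $e_\CC := \decode(e,\sigma)$, so that the goal in each case is $\decode(\avalue(e,\compile(p)),\sigma) = \cvalue(e_\CC, p)$.

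\emph{Variables.} Here $\compile(v)=v$, so the left-hand side is $\decode(\avalue(e,v),\sigma)$. I would appeal to the (standard, assumed) defining clause of $\avalue$ on a variable, namely $\avalue(e,v)=e(v)$, giving $\decode(e(v),\sigma)$. By the element-wise lifting of $\decode$ to environments (the displayed equation after Definition~\ref{def:invariant}), this equals $\decode(e,\sigma)(v) = e_\CC(v)$, which is exactly $\cvalue(e_\CC,v)$ by the corresponding clause for concrete evaluation. So this case is immediate from the definitions.

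\emph{Local bindings.} Here $\compile(\code{let}~v=a~\code{in}~b) = \code{let}~v=\compile(a)~\code{in}~\compile(b)$. I would unfold the abstract \code{let}-rule: $\avalue(e,\code{let}~v=\compile(a)~\code{in}~\compile(b)) = \avalue(e[v\mapsto \avalue(e,\compile(a))],\compile(b))$. Now I want to invoke the induction hypothesis on the subexpression $b$ in the extended environment $e' := e[v\mapsto \avalue(e,\compile(a))]$; this requires knowing $\decode(e',\sigma) = e_\CC[v\mapsto \decode(\avalue(e,\compile(a)),\sigma)]$, which holds because $\decode$ on environments is defined pointwise, and then rewriting the bound slot via the induction hypothesis on $a$, namely $\decode(\avalue(e,\compile(a)),\sigma) = \cvalue(e_\CC,a)$. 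Combining, the left-hand side becomes $\cvalue(e_\CC[v\mapsto \cvalue(e_\CC,a)],b)$, which is precisely $\cvalue(e_\CC,\code{let}~v=a~\code{in}~b)$ by the concrete \code{let}-rule. (Strictness of the language makes this clean: the bound value is evaluated once, and both semantics agree to do so eagerly.)

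\emph{Function calls.} Here $\compile(f(a_1,\dots,a_n)) = f(\compile(a_1),\dots,\compile(a_n))$. The abstract call-rule evaluates the arguments, binds them to $f$'s formal parameters, and evaluates $f$'s (compiled) body; the concrete rule does the same with concrete values. I would apply the induction hypothesis to each argument $a_i$ to get $\decode(\avalue(e,\compile(a_i)),\sigma) = \cvalue(e_\CC,a_i)$, and then observe that the body of $f$ in the compiled program is $\compile(\text{body of }f)$, so the induction hypothesis applies to it as well, in the environment mapping each formal parameter $x_i$ to $\avalue(e,\compile(a_i))$ (whose decoding is, by the previous step, the environment mapping $x_i$ to $\cvalue(e_\CC,a_i)$). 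This yields the concrete call result. The main subtlety — and the only place where I expect to have to be careful — is the treatment of recursion: since global and \code{let}-bound functions may be recursive, the statement is really about the whole program's fixed point, so this "induction" is not a structural induction on a single finite expression but a coinductive/fixpoint argument (or, equivalently, an induction on the length of the concrete evaluation derivation, using that $\cvalue$ is total on the finite-type fragment considered in this section). I would phrase it as: for every terminating concrete evaluation, induct on the call depth, with the three clauses above as the inductive step; totality (assumed for this section) guarantees every evaluation terminates, so the invariant holds everywhere. That interplay between the recursive/fixpoint nature of function definitions and the proof structure is the one genuine obstacle; the per-construct algebra is routine unfolding.
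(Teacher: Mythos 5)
Your proposal is correct and follows essentially the same route as the paper: the variable case is the identical unfolding of $\avalue$, the pointwise lifting of $\decode$, and $\cvalue$, and the \code{let} and function-call cases are exactly the structural/inductive argument the paper merely asserts in one line. You are in fact more careful than the paper in noting that recursive definitions force the induction to be on call depth (or the length of the terminating concrete evaluation, using the section's totality assumption) rather than on expression structure alone; this is a legitimate refinement of, not a departure from, the paper's argument.
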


\begin{proof}
  Let $v$ be a variable of the input program and $\compile(v) = v$.
  As the abstract value of $v$ only depends on the value of $v$, i.e.
  \begin{equation*}
    \forall e \in \absEnv: \abstractValue(e, v) = e (v),
  \end{equation*}
  and the concrete value of the original expression $v$ only depends on the
  value of $v$ as well, i.e.

  \begin{equation*}
    \forall e \in \absEnv: \concreteValue (\decode (e,\sigma), v) = \decode (e,\sigma) (v),
  \end{equation*}
  by \eqref{eq:lift-decode} we have
  \begin{align*}
    \forall e \in \absEnv \forall \sigma \in \Sigma: 
         &\decode_T (\abstractValue(e,\compile(v)),\sigma) \\
      =\ &\decode_T (\abstractValue(e,v),\sigma) \\
      =\ &\decode_T (e (v),\sigma) \\
      =\ &\decode (e,\sigma) (v) \\
      =\ &\concreteValue (\decode (e,\sigma), v).
  \end{align*}
  So invariant \eqref{eq:invariant} holds.

The proof of correctness of compilation of
local bindings and function calls is by structural induction.
  \qed
\end{proof}

\begin{definition}[Compilation, constructor call]

For a constructor call $C(p_1,\ldots,p_n)$
where $C$  is the $i$-th constructor of a data type $T$ 
(with $c$ constructors in total)
and $p_j$ is of type $T_j$,

\begin{equation*}
\compile(C(p_1,\ldots,p_n)) = C'(\compile(p_1),\ldots,\compile(p_n))
\end{equation*}

where $C' : \AA^* \to \AA$ is a function that gets the abstract values 
$(a_1,\ldots,a_n)$ of the compiled constructor arguments as input.

\begin{equation}
  \label{eq:cons-abstract-value}
  C' (a_1,\ldots,a_n) = (\numeric_c^-(i), [a_1,\ldots,a_n])
\end{equation}

Note that $C'$ is evaluated during the runtime of the abstract program.

\end{definition}

\begin{lemma}[Correctness of compilation, constructor call]
  Invariant \eqref{eq:invariant} holds on compilation of constructor calls.
\end{lemma}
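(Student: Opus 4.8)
The plan is to unfold both sides of invariant~\eqref{eq:invariant} for the expression $p = C(p_1,\ldots,p_n)$, using the definition of $\compile$ on constructor calls, the strict abstract evaluation of function applications (here the application of $C'$), and the definitions of $\encode_T$, $\decode_T$, and $\cvalue$. First I would fix $e\in E_\AA$ and $\sigma\in\Sigma$, and set $a_j = \avalue(e,\compile(p_j))$ for $j=1,\ldots,n$, so that by the definition of $\compile$ and strict abstract evaluation, $\avalue(e,\compile(C(p_1,\ldots,p_n))) = C'(a_1,\ldots,a_n) = (\numeric_c^-(i),[a_1,\ldots,a_n])$ by~\eqref{eq:cons-abstract-value}.

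Next I would apply $\decode_T(-,\sigma)$ to this abstract value. By the formal definition of $\decode_T$, $\decode_T((\numeric_c^-(i),[a_1,\ldots,a_n]),\sigma)$ is the concrete value $C_k(w_1,\ldots,w_n)$ where $k = \numeric_c(\numeric_c^-(i)\,\sigma)$ and $w_j = \decode_{T_j}(a_j,\sigma)$. Here the flags $\numeric_c^-(i)$ are constant (they do not mention propositional variables), so applying $\sigma$ leaves the string in $S_c$ corresponding to $i$ unchanged, and hence $k = \numeric_c(\numeric_c^-(i)) = i$ — the constructor recovered is exactly $C = C_i$. This is the one point that needs a little care: I must invoke the fact that $\numeric_c$ extends $\numeric_c^{-1}$'s inverse correctly (the unique-prefix property from the construction of $S_c$) so that decoding a freshly encoded constructor index round-trips. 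For the arguments, $w_j = \decode_{T_j}(a_j,\sigma) = \decode_{T_j}(\avalue(e,\compile(p_j)),\sigma)$, and now the induction hypothesis (invariant~\eqref{eq:invariant} for the strictly smaller subexpressions $p_j$) gives $w_j = \cvalue(\decode(e,\sigma),p_j)$.

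Finally I would assemble the right-hand side: by the standard definition of concrete evaluation with strict constructor arguments, $\cvalue(\decode(e,\sigma),C(p_1,\ldots,p_n)) = C(\cvalue(\decode(e,\sigma),p_1),\ldots,\cvalue(\decode(e,\sigma),p_n)) = C(w_1,\ldots,w_n)$, which is precisely the left-hand side computed above. I do not expect a genuine obstacle here; the only subtlety is bookkeeping the constant-flag round-trip $\numeric_c(\numeric_c^-(i)) = i$ and making explicit that the induction is structural over subexpressions $p_1,\ldots,p_n$ (with the base cases supplied by the easy-cases lemma already proven). Strictness of both concrete and abstract evaluation of constructor arguments is what lets the argument proceed componentwise without any side conditions about definedness.
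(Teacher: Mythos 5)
Your proposal is correct and follows essentially the same route as the paper's proof: unfold both sides, observe that the constant flags $\numeric_c^-(i)$ fix the top-level constructor under any $\sigma$, and close the argument by structural induction on the constructor arguments. If anything, your write-up is more careful than the paper's, which leaves the round-trip $\numeric_c(\numeric_c^-(i)) = i$ implicit and elides the $\avalue$ and $\sigma$ arguments in its displayed chain of equalities.
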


\begin{proof}
  If $C$ is the $i$-th constructor of a type $T$, the decoding of 
  $\compile(C(p_1,\dots,p_n))$'s top level constructor is determined
  by the fixed flags of its abstract value \eqref{eq:cons-abstract-value}.

  \begin{equation}
    \label {eq:proof-con-abstract-value}
    \begin{aligned}
      \forall e \in \absEnv &\forall \sigma \in \Sigma: \\
        & \decode_T (\abstractValue (e, \compile (C (p_1,\dots,p_n))),\sigma) \\
      =\ & \decode_T (C' (\compile (p_1),\dots,\compile(p_n)),\sigma) \\
      =\ & C (\decode_T (\compile (p_1)),\dots,\decode_T (\compile (p_1)))\\
    \end{aligned}
  \end{equation}

  The top-level constructor of $C(p_1,\dots,p_n)'s$ concrete value is independent
  of any environment, so

  \begin{equation}
    \label {eq:proof-con-concrete-value}
    \begin{gathered}
      \forall e \in \absEnv \forall \sigma \in \Sigma: 
      \concreteValue (\decode (e,\sigma), C(p_1,\dots,p_n)) = \\
      C (\concreteValue (\decode (e,\sigma),p_1),\dots,\concreteValue (\decode (e,\sigma),p_n))
    \end{gathered}
  \end{equation}.

  The equality of \eqref{eq:proof-con-abstract-value} and 
  \eqref{eq:proof-con-concrete-value} is proven by induction over the constructor
  arguments.
  \qed
\end{proof}

We restrict to pattern matches where patterns are 
\emph{simple} (a constructor followed by variables)
and \emph{complete} (one branch for each constructor of the type).

\begin{definition}[Compilation, pattern match]

Consider a pattern match expression $e$ of shape
$\code{case}~d~\code{of}~\{ \dots \}$,
for a discriminant expression $d$ of type $T$ with $c$ constructors.

We have 
$\compile(e)=\code{let}~x = \compile(d)~\code{in}~\merge_c(\flags(x),b_1,\ldots)$
where $x$ is a fresh variable, 
and $b_i$ represents the compilation of the $i$-th branch.

Each such branch is of shape $C\ v_1 \ldots v_n \to e_i$,
where $C$ is the $i$-th constructor of the type $T$.

Then $b_i$ is obtained as
$\code{let}~\{ v_1 = \argument_1(x); \dots \}~\code{in}~\compile(e_i)$.

\end{definition}

We need the following auxiliary function
that combines the abstract values from branches of pattern matches,
according to the flags of the discriminant.

\begin{definition}[Combining function]\label{def:merge}
  $\merge : F^*\times \AA^c\to \AA$ combines abstract values so that
  $\merge(\overrightarrow{f},a_1,\ldots,a_c)$ 
  is an abstract value $(\overrightarrow{g},z_1,\dots,z_n)$, where

  \begin{itemize}
    \item $n = \max (|\arguments(a_1)|,\dots,|\arguments (a_c)|)$
    \item $|\arrow{g}| = \max (|\flags(a_1)|,\dots,|\flags (a_c)|)$
    \item for $1\le i\le |\arrow{g}|$,
      \begin{equation}      \label{eq:merge-flag}
      \begin{aligned}
        g_i \leftrightarrow \quad 
                  & (\numeric_c (\overrightarrow{f}) = 1 \Rightarrow \flags_i (a_1))\\
          \land \ & (\numeric_c (\overrightarrow{f}) = 2 \Rightarrow \flags_i (a_2))\\
          \land \ & \dots \\
          \land \ & (\numeric_c (\overrightarrow{f}) = c \Rightarrow \flags_i (a_c))\\
      \end{aligned}
      \end{equation}
    \item for each $1\le i\le n$, 
      $z_i = \merge(\overrightarrow{f},\argument_i (a_1),\ldots,\argument_i (a_c))$.

  \end{itemize}
\end{definition}

\begin{lemma}[Correctness of compilation, pattern match]
  Invariant \eqref{eq:invariant} holds on compilation of pattern matches.
\end{lemma}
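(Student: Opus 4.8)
The proof will establish invariant \eqref{eq:invariant} for $p = (\code{case}~d~\code{of}~\{\dots\})$ by reducing it, via the compilation rule for pattern matches, to a statement about the behaviour of $\merge$ together with the inductive hypotheses for the discriminant $d$ and for each branch body $e_i$. First I would fix $e\in\absEnv$ and $\sigma\in\Sigma$ and let $x$ be the fresh variable introduced by compilation, so that $\compile(e) = \code{let}~x = \compile(d)~\code{in}~\merge_c(\flags(x),b_1,\dots,b_c)$. Abstract evaluation of this \code{let} extends the environment to $e' = e\{x \mapsto a_d\}$ where $a_d = \avalue(e,\compile(d))$. By the induction hypothesis on $d$, we have $\decode_T(a_d,\sigma) = \cvalue(\decode(e,\sigma),d)$; call this concrete value $C_k(w_1,\dots,w_m)$, i.e.\ the discriminant picks out the $k$-th constructor $C_k$ of $T$ at assignment $\sigma$. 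Concrete evaluation of the \code{case} then proceeds into the $k$-th branch, binding $v_j$ to $w_j$, so the right-hand side of \eqref{eq:invariant} equals $\cvalue(\decode(e,\sigma)\{v_1\mapsto w_1,\dots\}, e_k)$.

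On the abstract side I must show that $\decode(\avalue(e',\merge_c(\flags(a_d),b_1,\dots,b_c)),\sigma)$ equals that same value. The key lemma I would isolate is a \emph{decoding property of $\merge$}: if $\numeric_c(\flags(a_d)\,\sigma) = k$, then for every assignment $\sigma$,
\[
  \decode_T(\merge_c(\flags(a_d),a_1,\dots,a_c),\sigma) = \decode_T(a_k,\sigma),
\]
where $a_i = \avalue(e',b_i)$ is the abstract value of the $i$-th compiled branch. This follows from Definition~\ref{def:merge}: under the assignment $\sigma$, the defining biconditional \eqref{eq:merge-flag} for each flag $g_i$ collapses — because $\numeric_c(\flags(a_d)\sigma)=k$ makes exactly the $k$-th implication's hypothesis true and all others vacuously true — to $g_i\sigma = \flags_i(a_k)\sigma$; hence $\numeric_c(\arrow g\,\sigma) = \numeric_c(\flags(a_k)\,\sigma)$ (using the extension of $\numeric_c$ to longer strings and the fact that the first $|\flags(a_k)|$ bits agree — here one uses that extra flags are ignored in decoding, exactly as illustrated by the $a_1,a_2$ example in the text), and recursively $z_j = \merge_c(\arrow f,\argument_j(a_1),\dots,\argument_j(a_c))$ decodes to $\decode(\argument_j(a_k),\sigma)$ by the same argument one level down. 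An induction on the (finite) tree structure of the abstract values closes this sub-lemma. It remains to identify $\decode_T(a_k,\sigma)$ with the concrete branch value: since $b_k = \code{let}~\{v_1 = \argument_1(x);\dots\}~\code{in}~\compile(e_k)$, the easy-cases lemma for \code{let} plus the induction hypothesis on $e_k$ give $\decode(\avalue(e',b_k),\sigma) = \cvalue(\decode(e',\sigma)\{v_j \mapsto \decode(\argument_j(a_d),\sigma)\}, e_k)$, and $\decode(\argument_j(a_d),\sigma) = w_j$ because $a_d$ decodes to $C_k(w_1,\dots,w_m)$ by the definition of $\decode_T$. Chaining these equalities yields \eqref{eq:invariant}.

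\textbf{Main obstacle.} The delicate point is the matching of constructor indices through $\merge$ when the branch abstract values $a_1,\dots,a_c$ have \emph{different} flag lengths and argument counts: $\merge$ pads to the maximum $|\arrow g|$ and $n$, so $\arrow g$ may be strictly longer than $\flags(a_k)$, and $z_j$ may exist for $j$ beyond the arity of $C_k$. I need the decoding sub-lemma to be robust to this padding, which is precisely why the paper extended $\numeric_c$ by $\numeric_c(u\cdot v) := \numeric_c(u)$ and why it insisted on the unique-prefix condition on $S_c$; the argument is that $\flags(a_k)\sigma$ already contains a prefix in $S_c$ (namely one decoding to $k$), so appending more bits — whatever the other branches contributed — does not change $\numeric_c$. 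A second subtlety, deferred here but flagged for Section~\ref{sec:code-recursive}, is that this section assumes abstract values are always ``long enough'' to decode; under that standing assumption the $\merge$ result inherits sufficiency from the branches, so no partiality issue arises. Everything else — the \code{let} steps, the environment bookkeeping, the strictness of argument evaluation — is routine given the easy-cases lemma and the inductive hypotheses.
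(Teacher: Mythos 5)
Your proposal is correct and follows essentially the same route as the paper's own proof: fix $\sigma$, let $k$ be the constructor index determined by the discriminant's flags, use \eqref{eq:merge-flag} to show the $\merge$ result decodes identically to the $k$-th branch's abstract value, and close with the induction hypotheses on the discriminant and the branch body. You are in fact more careful than the paper on two points it glosses over — the flag/argument padding in $\merge$ (handled via the prefix-free code and the extension $\numeric_c(u\cdot v)=\numeric_c(u)$) and the decoding of the pattern-variable bindings $v_j = \argument_j(x)$ — but these are refinements of, not departures from, the published argument.
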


\begin{proof}
  Let $m$ be a pattern match in the original program and $m'$ the result of
  the corresponding $\merge$.
  The decoding of $m'$ depends on an assignment $\sigma$.
  For any assignment $\sigma$ there is a $k \in [1,c]$ so that 
  $\numeric_c (\flags (m')) = k$.
  In this case, \eqref{eq:merge-flag} shows that for all flags of $m'$ 
  $\flags_i (m') \leftrightarrow \flags_i (a_k)$ holds, with $b_k$ being the
  $k$-th branch of $e$ and $a_k = \compile (b_k)$.  
  So, for a fixed $\sigma$ (and hence $k$) property

  \begin{equation*}
    \forall e \in \absEnv : 
      \decode (m',\sigma) = \decode (\abstractValue (e,a_k),\sigma)
  \end{equation*} 
  holds.

  As evaluating the concrete value of the original pattern match $m$ under
  an environment decoded by $\sigma$ leads to the evaluation of $b_k$, i.e.

  \begin{equation*}
    \forall e \in \absEnv : 
      \concreteValue (\decode (e,\sigma), m) = \concreteValue (\decode (e,\sigma),b_k) \\
  \end{equation*}
  .
  Invariant \eqref{eq:invariant} holds by induction over $b_k$:
  \begin{gather*}
    \forall e \in \absEnv : 
      \decode (\abstractValue (e,a_k),\sigma) = \concreteValue (\decode (e,\sigma),b_k) \\
  \end{gather*}
\end{proof}

\section {Partial encoding of Infinite Types}\label{sec:code-recursive}

We discuss the compilation and abstract evaluatation for constraints over infinite types,
like lists and trees. Consider declarations
(and recall that functions are still monomorphic)
\begin{verbatim}
data N = Z | S N
double :: N -> N
double x = case x of { Z -> Z ; S x' -> S (S (double x')) }
\end{verbatim}
Assume we have an abstract value $a$ to represent \verb|x|.
It consists of a flag (to distinguish between \verb|Z| and \verb|S|),
and of one child (the argument for \verb|S|), which is another abstract value.
At some depth, recursion must stop, since the abstract value is finite
(it can only contain a finite number of flags).
Therefore, there is a child with no arguments,
and it must have its flag set to $[\false]$ (it must represent \verb|Z|).

There is another option: if we leave the flag open (it can take on
values $\false$ or $\true$), then we have an abstract value with (possibly) 
a constructor argument missing. When evaluating the concrete program,
the result of accessing a non-existing component gives a bottom value.
This corresponds to the Haskell semantics 
where each data type contains bottom,
and values like $\code{S}~(\code{S}~ \bot)$ are valid.

\newcommand{\AAbot}{\AA_{\bot}}
\newcommand{\PPbot}{\PP_{\bot}}
\newcommand{\QQbot}{\QQ_{\bot}}

\begin{definition}
  The set of abstract values $\AAbot$ is the smallest set with  
  $\AAbot = \set{F}^* \times \AAbot^* \times \set{F}$, i.e. an abstract value
  is a triple of flags and arguments (cf. definition \ref{def:abstract-value}) 
  extended by an additional \emph{definedness constraint}.

  We write $\defness : \AAbot \to \set{F}$ to give the definedness constraint
  of an abstract value, and keep 
  $\flags$ and $\argument$ notation of  Definition~\ref{def:abstract-value}.
\end{definition}

The decoding function is modified accordingly: 
$\decode_T(a,\sigma)$
for a type $T$ with $c$ constructors
is $\bot$ if $\defness(a)\sigma=\false$, 
or $\numeric_c(\flags(a))$ is undefined (because of ``missing'' flags), 
or  $|\arguments(a)|$ is less than the number of arguments
of the decoded constructor.

The correctness invariant for compilation (Eq. \ref{eq:invariant}) is still the same,
but we now interpret it in the domain $\CC_\bot$,
so the equality says that if one side is $\bot$, then both must be.

Consequently, for the application of the invariant,
we now require that the abstract value of the top-level constraint
under the assignment is defined and $\true$.

\paragraph{Abstract evaluation with bottoms.}
For working with recursive types, we need recursive programs.
If the input program is recursive, then so is the abstract program.
Pattern matching is crucial to terminate recursion
(e.g., to detect the end of a list), 
but the abstract program cannot pattern match, as explained earlier.

We introduce a limited form of matching: 
in the abstract evaluation of $\code{let} x = \compile(d) \code{in} \dots$ 
(see Compilation, pattern match), we consider $\code{let}$ to be strict:
if $\abstractValue(E,\compile(d))$ 
has a definedness flag that is constant $\false$, 
then the whole expression's abstract value is bottom,
and is represented by $([],[],\false)$.

If definedness is not constantly $\false$, 
then abstract evaluation will execute $\merge$,
modified as follows: the definedness flag of result $m$ of a $\merge$ is
      \begin{equation*} 
      \begin{aligned}
        \defness(m)
         \leftrightarrow \quad 
                  & (\numeric_c (\overrightarrow{f}) = 1 \Rightarrow \defness (a_1))\\
          \land \ & (\numeric_c (\overrightarrow{f}) = 2 \Rightarrow \defness (a_2))\\
          \land \ & \dots \\
          \land \ & (\numeric_c (\overrightarrow{f}) = c \Rightarrow \defness (a_c))\\
      \end{aligned}
      \end{equation*}

Note that (definedness and other) flags are formulas,
and in general we cannot determine their value (without an assignment).
The given method relies on some form to detect that a formula denotes a constant.

\section{Higher order functions and polymorphism}\label{sec:remove}

For formulating the constraints, expressiveness in the language is welcome.
Since we base our design on Haskell, it is natural to include some of its
features that go beyond first-order programs: higher order functions
and polymorphic types. 

Our program semantics is first-order:
we cannot (easily) include functions as result values or in environments,
since we have no corresponding abstract values for functions.
Therefore, we instantiate all higher-order functions
in a standard preprocessing step, starting from the main program.

Polymorphic types do not change the compilation process.
The important information is the same as with monomorphic typing:
the total number of constructors of a type, 
and the number (the encoding) of one constructor.

\section{Memoization}\label{sec:memo}

We describe another optimization: in the abstract program,
we use memoization for all subprograms.
That is, during execution of the abstract program,
we keep a map from (function name, argument tuple) to result.
Note that arguments and result are abstract values.

This allows to write ``natural'' specifications
and still get a reasonable implementation.

\begin{example}
  The textbook definition of the lexicographic path order $>_{\textit{lpo}}$ (cf. \cite{Baader})
  defines an order over terms according to some precedence.
  Its textbook definition is recursive, and leads to an
  exponential time algorithm, if implemented literally.
  By calling $s >_{\textit{lpo}} t$ the algorithm still does only compare 
  subterms of $s$ and $t$,
  and in total, there are $|s|\cdot |t|$ pairs of subterms,
  and this is also the cost of the textbook algorithm with memoization.
\end{example}

The next example is similar, but it additionally shows 
that abstract execution may increase cost,
but memoization may reduce it again.

\begin{example}
The following function determines
whether \verb|xs| is a (scattered) subword of \verb|ys|.

\begin{verbatim}
subword :: Eq a => [a] -> [a] -> Bool
subword xs ys = case xs of
    [] -> True
    x : xs' -> case ys of
        [] -> False
        y : ys' -> case x == y of
            False -> subword  xs ys'
            True  -> subword xs' ys'
\end{verbatim}

As a program on concrete values, this has linear complexity,
since in each recursive call, the length of the second argument decreases.

In the compiled program for abstract values,
for each \verb|case|, each branch is executed, and the results are merged.
In particular,
both branches of \verb|case x==y of ..| will be executed,
so the resulting cost is exponential in the size of \verb|ys|.

With memoization, the compiled program runs in polynomial time
(and produces a polynomially sized formula)
since in each subprogram call that happens 
during the evaluation of \verb|subword xs0 ys0|,
the actual arguments \verb|xs, ys| 
are suffixes of the respective initial arguments,
and there are (\verb|length xs * length ys|) pairs of suffixes.

\end{example}

\section{Case study: Loops in Term Rewriting}\label{sec:loop}

As an application, we use CO4 for compiling
constraint systems that describe looping derivations.
This is motivated by automated analysis of programs.
A loop is an infinite computation,
which may be unwanted behaviour,
indicating an error in the program's design.
In general, it is undecidable whether 
a rewriting system admits a loop.
By enumerating finite derivations, 
one can hope to find loops.

Our approach is to write the predicate
``the derivation $d$ conforms to a rewrite system $R$
and $d$ is looping'' as a Haskell function,
and solve the resulting constraint system,
after putting bounds on the sizes of the terms
that are involved.

Previous work uses several heuristics for enumerations resp.
hand-written propositional encodings for finding loops in 
string rewriting systems~\cite{DBLP:conf/sofsem/ZanklSHM10}.

We extend to (1) systematic compilation
and (2) term rewriting.

In the following, we show the data declarations we use,
and give code examples.
\begin{itemize}
\item we fix a signature, and a set of variables, and define the set of terms
\begin{verbatim}
data Term = V Name | F Term Term Term | A | B | C
data Name = X | Y
\end{verbatim}
\item a rule is pair of terms, a rewrite system is list of rules
\begin{verbatim}
  data Rule = Rule Term Term
  data List a = Nil | Cons a (List a)
  type TRS = List Rule
\end{verbatim}
\item 
    a rewrite step is a tuple $(t_0, (l,r), P, \sigma, t_1)$
    where $t_0,t_1$ are terms, $(l,r)$ is a rule, $p$ is a position, 
    $\sigma$ is a substitition
    with $l\sigma = t_0[p]$ and $t_0[p := r\sigma] = t_1$

    \begin{verbatim}
  data Pair a b = Pair a b
  type Substitution = List (Pair Name Term)
  data Step = Step Term Rule (List Pos) Substitution Term
    \end{verbatim}

\item a derivation w.r.t. a TRS \texttt{trs} is a list of steps
  \begin{verbatim}
  type Derivation = List Step
  \end{verbatim}
  where
  \begin{itemize}
  \item the result \texttt{term} of one step is the input term of the next step 
  \begin{verbatim}
  derive_ok :: TRS -> Term -> Derivation -> Maybe Term
  derive_ok trs term deriv = case deriv of
    Nil           -> Just term
    Cons s deriv' -> case s of
      Step t0 rule pos sub t1 -> case equalTerm term t0 of
        False -> Nothing
        True  -> case step_ok trs s of 
                    False -> Nothing
                    True  -> derive_ok trs t1 deriv'
  \end{verbatim}

  \item each step's rule is from the \texttt{trs}
  \end{itemize}

\item a looping derivation's output of the last step
    has a subterm that is a substitution instance of the input of the first step
\end{itemize}

Overall, the complete CO4 code 
(available at
\url{https://github.com/apunktbau/co4/blob/master/CO4/Test/TRS_Loop_Toyama.standalone.hs})
consists of roughly 300 lines of code 
including the definition of
all involved data types and auxiliary functions. 
The code snippets above shows that the constraint system
literally follows the textbook definitions.

Our test case is the following term rewriting system,
where $X,Y$ are variables,
\begin{equation*}
  \{
  f (a,b,X) \to f (X,X,X), 
  f (X,Y,c)   \to X ,
  f (X,Y,c)   \to Y \},
\end{equation*}
(corresponding to the classical example from
\cite{DBLP:journals/ipl/Toyama87}).
We use allocators that restrict to derivations of length 3, and terms of depth 2.
Abstract evaluation of the compiled program
results in a propositional formula with 774663 variables and 2301608 clauses.
On a standard Intel Core 2 Duo CPU with 2.20\,GHz, Minisat SAT solver
finds the following loop in around 10 seconds:

\begin{align*}
  &f (a,b,f (a,b,c)) \Rightarrow 
  f (f (a,b,c), f(a,b,c),f(a,b,c)) \\ \Rightarrow
  &f (a,f (a,b,c),f(a,b,c)) \Rightarrow 
  f (a,b,f(a,b,c))
\end{align*}

\footnotesize
\begin{verbatim}
CNF finished (#variables: 774663, #clauses: 2301608)
Solver finished in 10.823333 seconds (result: True)
Solution: Looping_Derivation ...
Test: True
\end{verbatim}
\normalsize

\section{Discussion}

In this paper we described the CO4 constraint language and compiler
that allows to write constraints on tree-shaped data in a natural way, 
and to solve them via propositional encoding.
We presented an outline of a correctness proof for our implementation,
and gave an example that shows that the compiler actually works.

In this example, the resulting formula is huge.
Still, the SAT solver can handle it rather quickly.
This indicates that there is room for improving the efficiency
of both compilation and abstract interpretation,
in order to obtain smaller, equivalent, formulas
from constraint systems.

We have several plans for this, for instance, 
hard-wiring improved implementations of basic boolean operations.
We leave this as a subject of further research and implementation,
for which the present report shall serve as a basis.

We mention two additional application areas of the concepts presented here:
\begin{itemize}
\item Different back-ends:
  In our example application, formulas (in abstract values) are ultimately
  represented as conjunctive normal forms, as this suits the SAT solver best.
  By changing the implementation of abstract values (but keeping the compiler),
  our system can output circuit descriptions for hardware design;
  and also Binary Decision Diagrams, which can be used for counting models
  of constraint systems.
\item Complexity analysis: from an (automated) analysis of the input program,
  one can obtain the (asymptotic) size  of the resulting propositional formula.
  For instance,
  if it is found to be polynomial (in the size of the input parameter),
  then satisfiability of the constraint problem is (automatically shown to be) in NP.
\end{itemize}

\bibliographystyle{plain}
\bibliography{bib}

\end{document}